\journal{Chaos, Solitons and Fractals}
\newcommand{\cmark}{\ding{51}}%
\newcommand{\xmark}{\ding{55}}%
\newcommand{\caselabel}[1]{\label{case:#1}}
\newcommand{\caseref}[1]{\ref{case:#1}}
\newcommand{\Case}[1]{Case~\caseref{#1}}
\newcommand{\case}[1]{case~\caseref{#1}}
\newcommand{\eqlabel}[1]{\label{eq:#1}}
\newcommand{\eq}[1]{\eqref{eq:#1}}
\newcommand{\figlabel}[1]{\label{fig:#1}}
\newcommand{\figref}[1]{\ref{fig:#1}}
\newcommand{\Fig}[1]{Figure~\figref{#1}}
\newcommand{\fig}[1]{Figure~\figref{#1}}
\newcommand{\proplabel}[1]{\label{prop:#1}}
\newcommand{\propref}[1]{\ref{prop:#1}}
\newcommand{\prop}[1]{Proposition~\propref{#1}}
\newcommand{\seclabel}[1]{\label{sec:#1}}
\newcommand{\secref}[1]{\ref{sec:#1}}
\newcommand{\secn}[1]{Section~\secref{#1}}
\newcommand{\tablabel}[1]{\label{tab:#1}}
\newcommand{\tabref}[1]{\ref{tab:#1}}
\newcommand{\Tab}[1]{Table~\tabref{#1}}
\newtheorem{proposition}{Proposition}
\providecommand{\abs}[1]{\left\lvert#1\right\rvert}             
\newcommand{\Complex}{\mathbb{C}}                               
\newcommand{\const}{\mathrm{const}}                             
\renewcommand{\d}{\mathrm{d}}                                   
\newcommand{\Df}[2]{\frac{\d{#1}}{\d{#2}}}                      
\newcommand{\Ddf}[2]{\frac{\d^2{#1}}{\d{#2}^2}}                 
\newcommand{\e}{\mathrm{e}}                                     
\newcommand{\iu}{\mathrm{i}}                                    
\newcommand{\Mx}[1]{\begin{bmatrix}#1\end{bmatrix}}             
\newcommand{\mx}[1]{\mathbf{#1}}                                
\renewcommand{\O}[1]{\mathcal{O}\left(#1\right)}                
\renewcommand{\P}[1]{\mathcal{P}_{#1}}                          
\renewcommand{\Re}[1]{\mathop{\mathrm{Re}}\left(#1\right)}      
\newcommand{\Real}{\mathbb{R}}                                  
\newcommand{\Zahlen}{\mathbb{Z}}                                
\newcommand{\+}[2]{\def#1{{#2}}}               
\newcommand{\1}[2]{\def#1##1{{#2}}}            
\begin{document}

\begin{frontmatter}
\title{Exact Propagating Wave Solutions in Reaction Cross-Diffusion System}
\author[alharj,exeter]{Abdullah Aldurayhim}
\author[exeter]{Vadim N. Biktashev\corref{mycorrespondingauthor}}
\cortext[mycorrespondingauthor]{Corresponding author}
\address[alharj]{
  Mathematics Department, %
  College of Science and Humanities in Al-Kharj, %
  Prince Sattam Bin Abdulaziz University, %
  Al-Kharj, Saudi Arabia %
}
\address[exeter]{
  College of Engineering, Mathematics and Physical Sciences, %
  University of Exeter, %
  Exeter EX4 4QF, UK %
}
\begin{abstract}
  Reaction-diffusion systems with cross-diffusion terms in addition to, or
  instead of, the usual self-diffusion demonstrate interesting features which
  motivate their further study. The present work is aimed at designing a toy
  reaction-cross-diffusion model with exact solutions in the form of
  propagating fronts. We propose a minimal model of this kind which involves
  two species linked by cross-diffusion, one of which governed by a linear
  equation and the other having a polynomial kinetic term. We classify the
  resulting exact propagating front solutions. Some of them have some features
  of the Fisher-KPP fronts and some features of the ZFK-Nagumo fronts.
\end{abstract}
\begin{keyword}
  Reaction-diffusion \sep
  cross-diffusion \sep
  Fisher-KPP model \sep
  ZFK-Nagumo model \sep
  propagating wave \sep
  propagating front
\end{keyword}
\end{frontmatter}


\section{Introduction}

Reaction-diffusion systems are models that are used widely to model physical,
chemical, biological and ecological processes.  Realistic models of such
processes are typically quite complicated and can only be dealt with
numerically.  However qualitative understanding of the most important features
benefits from analytical approaches, even if that requires sacrfices in
quantitative accuracy. This may be achieved by using asymptotic methods and/or
considering ``toy models''.

One of the first and famous ``toy'' reaction-diffusion
  systems is the model of propagation of an advantageous gene due to
  Fisher~\cite{Fisher-1937} and Kolmogorov, Petrovsky and
  Piskunov~\cite{Kolmogorov-etal-1937}. We refer to it as Fisher-KPP
model.
Another early archetypal reaction-diffusion equation was a model of flame
propagation considered by Zeldovich and
Frank-Kamenetsky~\cite{Zeldovich-FrankKamenetsky-1938}, which later became known also
as Schl\"ogl model~\cite{Schloegl-1972} and Nagumo
equation~\cite{McKean-1970}. We refer to it as ZFK-Nagumo equation.
Both models have monotonic propagating wavefront solutions of similar
appearances, but each has its own distinct mechanism.  The Fisher-KPP model
shows the transition from an unstable resting state to a stable resting state,
while the ZFK-Nagumo model shows the transition from one stable resting state
to another stable resting state.  Another qualitative difference between them
is that ZFK-Nagumo model exhibits a unique, up to a constant shift in time or
space, propagating front solution with a fixed speed and shape, whereas
Fisher-KPP model has a family of solutions with a continuous range of possible
speeds.
The importance of these toy models goes well beyond providing simplest
examples. For instance, the ZFK-Nagumo equation can be considered as
the fast subsystem in describing pulse waves in the FitzHugh-Nagumo
and similar systems using singular perturbation
techniques~\cite{Tyson-Keener-1988,Ikeda-etal-1989-PD}.

In the last decades, there
  has been ever increasing attention to reaction-diffusion systems
  which have cross-diffusion of the dynamic variables in addition or
  instead of their self-diffusion. These occur in mathematical
  modelling of various natural phenomena of biological, physical and
  chemical nature, such as mutual taxis of interacting species,
  including e.g.  %
  spatial segregation phenomena between the competing
  species~\cite{%
    Shigesada-etal-1979-JTB,%
    Iida-etal-2006-JMB,%
    Moussa-etal-2019-JNS%
  },
  cell types \cite{Sherratt-2000-PRSA} %
  and human population groups~\cite{Yizhaq-etal-2002-EPES}, and
  prey-taxis of predators and evasion of predators by prey~\cite{%
    Kerner-1959-BMB,%
    Kareiva-Odell-1987-AN,%
    Tsyganov-etal-2003-PRL,%
    Biktashev-Tsyganov-2005,%
    Wang-2006-CMA,%
    Tsyganov-etal-2007-UFN,%
    Lee-etal-2008-BMB,%
    Tsyganov-Biktashev-2014,%
    Biktashev-Tsyganov-2016-SR,%
    Negreanu-2020-DCDS%
  }; %
  interaction of populations of organisms or cells with environment,
  including e.g. %
  slime mold aggregation~\cite{Keller-Segel-1970-JTB}, %
  tumor angiogenesis~\cite{Sleeman-etal-2005-SIAM}, amoeboid
  locomotion~\cite{Ueda-etal-2011-PRE} %
  and %
  thermoregulation in honey bee
  colonies~\cite{Bastiaansen-etal-2020-SIAM}; %
  dissipative mechanical processes such as stick-slip motion of
  geological plates~\cite{Cartwright-etal-1997,
    Cartwright-etal-1999}; %
  as well as the literal cross-diffusion of reacting chemical
  species~\cite{Kirkaldy-1958-CJP,Vanag-Epstein-2009-PCCP,Gorban-etal-2011}. %
  Furthermore, cross-diffusion terms may appear ``mathematically'',
  via adiabatic elimination of fast but diffusing variables~\cite{%
    Kuramoto-1980-PTP,%
    Kuznetsov-etal-1994,%
    Iida-etal-2006-JMB,%
    Biktashev-Tsyganov-2016-SR,%
    Moussa-etal-2019-JNS%
  }. %
Interesting phenomena have been described in such systems,
  where the cross-diffusion plays an essential role alongside with the
  self-diffusion and  reaction part of the system.  This includes
  e.g. pattern formation via Turing-type instabilities~\cite{%
    Keller-Segel-1970-JTB,%
    Shigesada-etal-1979-JTB,%
    Kareiva-Odell-1987-AN,%
    Horstmann-2003-JDMV,%
    Iida-etal-2006-JMB,%
    Wang-2006-CMA,%
    Vanag-Epstein-2009-PCCP,%
    Lee-etal-2009-JBD,%
    Moussa-etal-2019-JNS%
  } and propagation of waves of various kinds~\cite{%
    Keller-Segel-1971-JTB-2,%
    Kuznetsov-etal-1994,%
    Sherratt-2000-PRSA,%
    Teramoto-etal-2004-PRE,%
    Lee-etal-2008-BMB,%
    Vanag-Epstein-2009-PCCP%
  }. %
  Overall, the literature on cross-diffusion models is too vast for an
  exhaustive survey here; some reviews of models and results with
  further references can be found e.g. in~\cite{%
    Shigesada-Kawasaki-1997-book,%
    Okubo-Levin-2000,%
    Horstmann-2003-JDMV,%
    Horstmann-Winkler-2005-JDE,%
    Vanag-Epstein-2009-PCCP,%
    Bellomo-etal-2015-MMMAS,%
    Painter-2019-JTB%
  }.
  
The focus of this work is on systems with excitable reaction
  kinetics, motivated by observations that including cross-diffusion
  in addition or instead of self-diffusion led to new
  phenomena~\cite{%
    Cartwright-etal-1997,%
    Cartwright-etal-1999,%
    Tsyganov-etal-2003-PRL,%
    Biktashev-Tsyganov-2005,%
    Tsyganov-etal-2007-UFN,%
    Tsyganov-Biktashev-2014%
  }. For example, propagating waves in reaction-cross-diffusion
  systems (RXD) with excitable reaction kinetics could penetrate each
  other on collision, a behaviour that is unusual for excitable
  systems with self-diffusion only.

The properties of solutions in RXD systems in the above
  cited motivating works have been mostly studied numerically. An
analytical approach has been attempted 
in~\cite{Biktashev-Tsyganov-2005}. In that work, fast-slow separation
between reaction kinetics of two reacting species is assumed. The fast
subsystem has piecewise linear kinetics and linear cross-diffusion,
and admits exact analytical solutions in the form of propagating
fronts. Unlike the Fisher-KPP and ZFK-Nagumo fronts, these front
solutions are oscillatory. They can be matched asymptotically with
slow pieces to obtain complete asymptotic description of propagating
pulses. The fast subsystem in this approach is different from the
Fisher-KPP and ZFK-Nagumo equations in two aspects: that it is
two-component and it is piecewise linear, as
opposed to the two ``classical''
toy models which are both one-component and with polynomial
nonlinearity of the kinetics. At least two components are of course
required to have cross-diffusion. 

In the present work, we 
investigate the possibility of having exact front solutions in a
cross-diffusion system with polynomial kinetics, unlike
  piecewise kinetics of \cite{Biktashev-Tsyganov-2005}. Our leading
idea is to postulate the solutions and deduce the governing equations
from there. For simplicity and as the first step, we only consider
here monotonic fronts, similar to those found in the ZFK-Nagumo
equation.  Thus it is clear for the outset
  that as far as are motivating numerical observations are concerned,
  the present study can only have a methodological value, as the waves
  observed in excitable cross-diffusion systems typically have
  oscillatory fronts and backs, as illustrated in~\fig{pulse_front}.

The paper is organized as follows.
The  problem formulation is given in \secn{main}.
In \secn{poly}, we consider the possibilities of chosing polynomial
nonlinearity for the reaction term.
In \secn{stability}, we discuss the simplest aspects of stability of
possible solutions.
Then we show the correspondent polynomial function suitable for solutions 
of the wavefront type.  These are presented in \secn{Correspondent}.
We demonstrate the possibility to have a wavefront
solution of the system as generalisation for Fisher-KPP in
  \secn{Possibilities_of_Generalising} and analyse the choices of the parameters needed to
imitate Fisher-KPP model in \secn{Imitate_Fisher}.
We return to the question of stability, now for the selected wavefront
solution, in \secn{Stability_of_zero_one}.
Results of numerical simulation are presented in \secn{Num_Simul}.  These
simulations show that the wavefronts are unstable.
These instabilities are investigated in
\secn{instability_FRONT_outer} and the paper is concluded by discussion in
\secn{discussion}.

\section{Problem formulation}
\seclabel{main}

Let us consider the reaction-diffusion system in the form
\begin{equation}
\begin{aligned}
\u_\t &= \f(\u) - \v + \Duv \v_{\x\x} + \Duu \u_{\x\x},
\\
\v_\t &= \eps (\u-\v) + \Dvu \u_{\x\x} + \Dvv \v_{\x\x},
\end{aligned}
\eqlabel{RXD_1}
\end{equation}
where 
\[
  \f(\u) = \u(\u-\alp)(1-\u), 
\]
and the parameters are restricted by $0\le \eps \ll 1$, $\alp \in (0,1/2)$.

The system \eq{RXD_1} is well studied as a
  reaction-self-diffusion system, with $\Duu>0$, $\Dvv\ge0$ and
  $\Duv=\Dvu=0$.  If $\Duv\ne 0$ and/or $\Dvu\ne 0$, we have
  reaction-cross-diffusion system.  %
  Regarding the signs of the diffusion coefficients, one common
  restriction is that their matrix must be positively semi-definite,
  so in particular, $\Duu\ge0$, $\Dvv\ge0$. Regarding the signs of the
  cross-diffusion coefficients, all sorts of combinations are
  considered in literature.  One of the ways the cross-diffusion terms
  as in \eq{RXD_1} may appear in applications is via linearization of
  terms describing mutual taxis of dynamic variables, which may
  describe populations and/or environmental factors affecting
  populations. For instance, if $\u$ represents a population which
  diffuses and moves towards attractant $\v$, which may be an
  environmental factor or a prey population and which itself only
  passively diffuses, then $\Duv<0$ and $\Dvu=0$, as e.g. in~\cite{%
    Keller-Segel-1970-JTB,%
    Keller-Segel-1971-JTB-2,%
    Kareiva-Odell-1987-AN,%
    Lee-etal-2008-BMB,%
    Lee-etal-2009-JBD%
  }.
  A similar combination (up to a change of sign of one of the dynamic
  variables) occurs in description of interaction of geological
  plates~\cite{Cartwright-etal-1997,Cartwright-etal-1999}.
  If $\u$
  and $\v$ represent competing species which seek to avoid each other,
  this leads to $\Duv>0$, $\Dvu>0$, as in~\cite{%
    Shigesada-etal-1979-JTB,%
    Iida-etal-2006-JMB%
  }.
  For predator-prey relationship, on the contrary, one may
  expect pursuit-evasion behaviour, that is, positive prey taxis for
  predators, i.e. predators seeking prey and prey escaping from
  predators, so if $\u$ component represents prey population and
  $\v$ represents predator population, this means that $\Duv>0$ and
  $\Dvu<0$, as in \cite{%
    Kerner-1959-BMB,%
    Kareiva-Odell-1987-AN,%
    Wang-2006-CMA,%
    Lee-etal-2008-BMB,%
    Negreanu-2020-DCDS%
  }. 
 Well-posedness of an initial or boundary-value
  problem for this system is not self-evident: examples are known that
  systems with cross-diffusion are capable of producing solutions
  blowing up in final time, see
  e.g. ~\cite{Horstmann-Winkler-2005-JDE}. Some well-posedness results
  have been established, see e.g.
  \cite{Meyries-etal-2014-SIAD,Negreanu-2020-DCDS}, however
  \cite{Meyries-etal-2014-SIAD} requires strong ellipticity of the
  diffusion matrix and \cite{Negreanu-2020-DCDS} requires strong
  stability properties of the reaction part of the system, neither of
  which is true in the case we consider. %
  We work under assumption that solutions exist and behave
  ``reasonably''; some evidence for that, even if not rigorous, is
  provided by the fact that the solutions can be simulated
  numerically. Clearly the well-posedness for the particular variants
  of the system of the form ~\eq{RXD_1} we consider here requires
  separate study. It is beyond the scope of this paper.  %

If $\eps=0$, $\v\equiv 0$, the self-diffusion system degenerates to
the ZFK-Nagumo equation
\cite{Zeldovich-FrankKamenetsky-1938,Schloegl-1972,McKean-1970} for
$\u(\x,\t)$, with an exact propagating front solution.  A piecewise
linear N-shaped variant of $\f(\u)$ also leads to exact propagating
front solution~\cite{McKean-1970}.  Qualitative properties of this
equation, including existence of propagating front solutions, persist
for a generic N-shape, and for $0<\eps\ll 1$, these solutions can form
a basis of asymptotic description, see for
instance~\cite{Tyson-Keener-1988,Ikeda-etal-1989-PD}.

A similar asymptotic approach for $0<\eps\ll 1$ was considered for the
cross-diffusion case of \eq{RXD_1} in~\cite{Biktashev-Tsyganov-2005}.  To
make the problem analytically tractable, the consideration there was
restricted to a piecewise linear N-shaped function $\f(\u)$ and pure
cross-diffusion, with self-diffion totally absent, $\Duu = \Dvv = 0$.

In this paper we consider the same system as was dealt with in in
\cite{Biktashev-Tsyganov-2005}, namely
\begin{equation}
\begin{aligned}
\u_\t &= \f \left( \u \right) - \v + \Dv \v_{\x\x}, \\
\v_\t &= \eps \left( \u-\v \right) - \Du \u_{\x\x},
\eqlabel{RXD_main}
\end{aligned}
\end{equation}
and intend to extend the methodology of \cite{Tyson-Keener-1988} and
\cite{Biktashev-Tsyganov-2005} for a polynomial function $\f(\u)$. In
absence of self-diffusion terms and in consideration of
  the chosen signs of the cross-diffusion coefficients, we abbreviate
  $\Dvu=\Du$ and $\Duv=-\Dv$.

We start by recapitulation of the approach
of~\cite{Biktashev-Tsyganov-2005} to set the scene and introduce
notation and terminology.  Direct numerical simulations of
\eq{RXD_main} with  cubic
$\f(\u)$ produces, in particular, solutions in the form of propagating
pulses of a fixed shape, as illustrated in~\fig{pulse_front}.  For
small $\eps$, the width of the pulse grows as $\O{\eps^{-1}}$.  This
means that in the limit $\eps \rightarrow 0$, the wave front and the
wave back of the pulse go apart. Our hypothesis
  is that for very small $\eps$, the system we are going to construct,
  will behave similarly to those discussed in \cite{Tyson-Keener-1988}
  and \cite{Biktashev-Tsyganov-2005}. Namely, we expect that a typical
  propagating wave solution will have the form of long stretches where
  $\u(\x,\t)$ remains near an instant equilibrium of the fast
  equation, satisfying $\f(\u)\approx\v$, which are interspersed by
  fast transitions from one such quasi-equibrium to another. Any such
  transition is approximated by an $\eps=0$ solution in the form of a
  wave which propagates with constant speed and shape and, far behind
  and far ahead, approaches constants, corresponding to the above
  mentioned quasi-equilibria. In particular, a pulse solution such as
  the one shown in \fig{pulse_front}, includes two such fast
  transitions, a front and a back.  Both the front and back represent
transitions between two distinct equilibrium points, say
$\left( \ur1 ,\vr1 \right)$ and $\left( \ur2 ,\vr2 \right)$.

\begin{figure}[tbhp]
  \centerline{\includegraphics{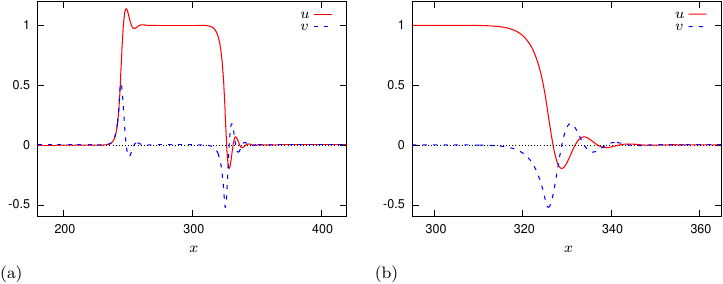}}
  \caption{%
    (a) The direct numerical simulation of \eq{RXD_main} the reaction
    cross-diffusion system with a no-flux boundary exhibits a propagating pulse
    with $\f(\u)=\u(\u-0.3)(1-\u)$ and the values of parameters are
    $\eps = 0.001$, $\Du = 5$ and $\Dv=0.5$. %
    (b) At small distance and time, the front of the pulse of reaction
    cross-diffusion system with a cubic nonlinearity approach to two asymptotic
    states $(\ur1,\vr1)$ and $(\ur2,\vr2)$. %
  }
  \figlabel{pulse_front}
\end{figure}

In the limit $\eps \rightarrow 0$ the system \eq{RXD_main} turns into
\begin{equation}
\begin{aligned}
  \u_\t &= \f \left( \u \right) - \v + \Dv \v_{\x\x}, \\
  \v_\t &= - \Du \u_{\x\x}.
\end{aligned}
\eqlabel{RXD_2}
\end{equation}
The two equilibria $\left( \ur1 , \vr1 \right)$ and
$\left( \ur2 , \vr2 \right)$, the asymptotic states of the wave front and the
wave back, satisfy $\f\left( \ur\j \right)=\vr\j$.  Let $\uh(\xf) = \u(\x,\t)$
and $\vh(\xf) = \v(\x,\t)$ be a propagating wave solution of \eq{RXD_2},
where $\xf = \x-\c\t$ and $\c >0$.  Substituting this into the system
\eq{RXD_2} yields
\begin{align}
  & \Dv \Ddf{\vh}{\xf} + \c \Df{\uh}{\xf} + \f \left( \uh \right) - \vh = 0, \eqlabel{RXD_5} \\
  & - \Du \Ddf{\uh}{\xf} + \c \Df{\vh}{\xf}  =0. \eqlabel{RXD_6}
\end{align}
As the front asymptotically approaches distinct steady states, we have
\begin{align}
  & \uh \left( \pm \infty \right) = \ur{1,2} , \quad
    \vh \left( \pm \infty \right) =  \vr{1,2}  \eqlabel{RXD_7} \\
  & \Df{\uh}{\xf} \left( \pm \infty \right) = 
    \Df{\vh}{\xf} \left( \pm \infty \right) = 0 .  \eqlabel{RXD_8}
\end{align}
Integrating \eq{RXD_6} with respect to $\xf$ gives
\begin{equation}
  \vh - \frac{\Du}{\c} \uh' = \vs = \const.               \eqlabel{RXD_9}
\end{equation}
When $\xf\to\pm\infty$, we obtain from \eq{RXD_9} that
$\vs = \vh_1 = \vh_2$ and then equation \eq{RXD_5} turns into
\begin{equation}\eqlabel{two-roots}
  \f\left( \ur{1,2} \right) = \vs.
\end{equation}
We have from equation \eq{RXD_7} that $\c\vh' = \Du \uh''$, hence
$\vh'' = \Du\uh'''/\c$.
Substituting this into \eq{RXD_6} yields
\begin{equation}
  \Dv \Du \uh'''
  + \left( \c^2 - \Du \right) \uh'
  + \c \left( \f (\uh) - \vs \right) = 0 , \qquad
  \uh(\pm \infty) = \ur{1,2}, \eqlabel{RXD_10}
\end{equation}
where $\uh$ is a wave solution for the reaction cross-diffusion system
\eq{RXD_2}.

This differential equation is deduced by applying the wave variable on the
reaction-cross-diffusion system \eq{RXD_2}.  Biktashev and Tsyganov
\cite{Biktashev-Tsyganov-2005} have replaced $\f(\uh)$ by a piecewise linear
function.  The fronts that are obtained from the piecewise linear function are
oscillatory fronts and are similar to those seen in
numerical simulations with cubic $\f(\uh)$.  We seek to consider a polynomial
function for $\f (\uh)$ instead of piecewise linear function, which would
still yield explicit analytical solutions for propagating fronts.

\section{Selecting the class of the polynomial reaction term}
\seclabel{poly} 

We aim to identify polynomial functions $\f(\uh)$ which would make the
differential equation \eq{RXD_10} analytically solvable.  First we write
the equation \eq{RXD_10} as
\begin{equation}
\A \uh''' + \B \uh' = \fh (\uh),
\eqlabel{RXD_11}
\end{equation}
where
\[
  \A = - \tfrac{\Du \Dv}{\c},   \quad
  \B = \tfrac{\Du - \c^2}{\c},  \quad
  \fh(\uh) = \f(\uh) - \vs. 
\]
We apply a reduction of order substitution,
\begin{align}
  \Df{\uh}{\xf} = \y \left( \uh \right). \eqlabel{RXD_12}
\end{align}
Substituting \eq{RXD_12} into \eq{RXD_11} gives
\begin{equation}
  \y \left[ \A \left( \y'^2 +\y \y'' \right) +\B  \right] = \fh(\uh) . \eqlabel{RXD_13}
\end{equation}
We aim that function $\fh(\uh)$ is a polynomial. This can be assured if
$\y(\uh)$ is a polynomial.

Let us find the possible degree of the polynomials $\y(\uh)$ and $\fh(\uh)$.
Let $\P\n$ be the set of polynomials of degree $\n$.  If $\y\in \P\n$, then
\[
  \y \left[ \A \left( \y'^2 +\y \y'' \right) + \B  \right] = \fh(\uh) \in \P{3\n-2} . 
\]
If $\n=1$ then $\fh(\uh)$ is linear, which is not of interest for us, as this
cannot produce two distinct solutions for~\eq{two-roots}.  If $\n=2$ then
$\fh(\uh)$ is quartic.  This quartic polynomial is comparable to cubic, in
that it can describe bistability, if it has at least three simple
roots. Therefore, $\y\in \P2$, $\fh\in \P4$ is the simplest suitable choice.
 
The travelling wave differential equation for ZFK-Nagumo can be solved
analytically by a reduction of order \cite{McKean-1970}. Incidentally, in
that solution $\y(\uh)$ is also quadratic.  It is convenient to factorise the
quadratic polynomial $\y (\uh)$,
\begin{equation}
  \y (\uh) = \k \left( \uh - \g \right) \left( \uh - \h \right),    \eqlabel{RXD_15}
\end{equation}
for some constants $\k\ne0$, $\g$ and $\h$.  Note that due to~\eq{RXD_7},
\eq{RXD_8} and \eq{RXD_12}, we have $\{\ur1,\ur2\}=\{\g,\h\}$.

From \eq{RXD_12} and \eq{RXD_15}, we obtain
\begin{equation}
  \uh(\xf) = \frac{\g+ \h\,\e^\ch}{1 + \e^\ch} , \qquad 
  \ch = \k (\xf + \C) (\g-\h) , \eqlabel{RXD_16}
\end{equation}
where $\C$ is an arbitrary constant. The front wave described by
\eq{RXD_16} is illustrated in \fig{tanh}.

\begin{figure}[tbhp]
  \centerline{\includegraphics{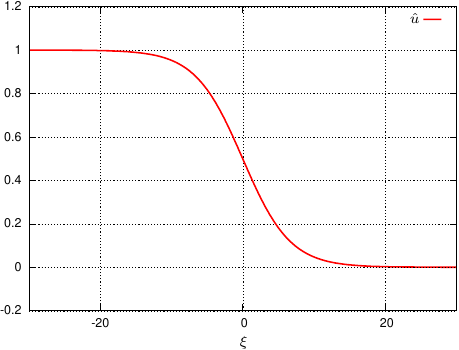}}
  \caption{%
    The solution $\uh(\xf)$ given by \eq{RXD_1} for $\g = 1$, $\h=0$,
    $\k=0.3$ and $\C = 0$. %
  }
  \figlabel{tanh}
\end{figure} 

Once $\uh(\xf)$ is known, we can find $\vh(\xf)$ using \eq{RXD_9}, as
\begin{equation}
\vh(\xf) = \vs + \frac{\Du}{\c} \uh'(\xf) . 
\eqlabel{RXD_17}
\end{equation}
Obviously, the profile of component $\vh$ represents not a wave front but a
pulse. In accordance with \eq{RXD_8}, we have
\[
  \vh \left( \pm \infty \right) = \vs .         
\]

\section{On the stability of the front solutions: continuous spectrum}
\seclabel{stability} 

The stability of any front solution we seek shall depend, in
particular, on the stability of its asymptotic spatially uniform steady states,
that is, on the continuous spectrium. This, unlike the
  discrete spectrum, is easily done analytically.  The system
\eq{RXD_2} can be written in the matrix form
\[
  \w_\t = \F(\w) + \D \w_{\x\x},               
\]
where
\begin{align*}
  \w= \Mx{ \u \\ \v }, \quad
  \F(\w) = \Mx{ \f(\u) - \v \\ 0}, \quad 
  \D = \Mx{0 & \Dv \\ -\Du & 0}.
\end{align*}
Suppose $\weq = [\ueq,\veq]^T$ is an equilibrium, i.e. $\F(\weq) = 0$.  We
perturb this point,
\[
  \w = \weq +  \wt ,
\]
and in the linear approximation we have
\begin{equation}
  \wt_\t = \F'(\weq) \wt + \D \wt_{\x\x} , \eqlabel{RXD_24}
\end{equation}
where $\F'=\Mx{\partial\F/\partial\w}$ is the Jacobian matrix.
By separation of variables, particular solutions of \eq{RXD_24} bounded in
space can be written as linear combinations of
\begin{equation}
  \wt(\x,\t) = \e^{\iu\wavenumber \x} \e^{\lam \t} \Mx{\C_1\\\C_2},  \qquad
  \wavenumber\in\Real, \quad \lam, \C_1, \C_2 \in\Complex . \eqlabel{RXD_25}
\end{equation}

\begin{figure}[tbhp]
  \centerline{\includegraphics{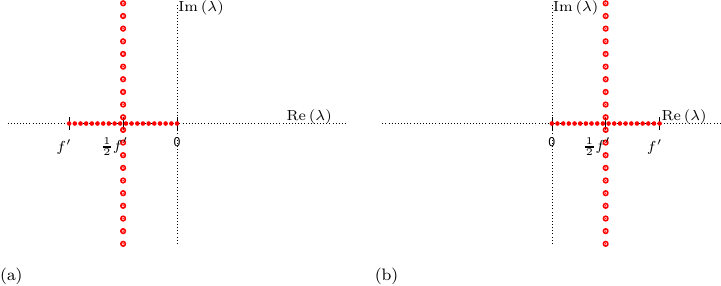}}
  \caption{%
    The continuous spectrium of an equilibrium, for %
    (a) $\f'=\f'(\ueq)<0$, %
    (b) $\f'=\f'(\ueq)>0$, %
    according to \eq{RXD_26}. %
  } \figlabel{spec}
\end{figure}

Substituting \eq{RXD_25} in \eq{RXD_24}, gives and eigenvalue problem
\begin{equation}
  \Mx{
    \f'(\ueq) & -1-\wavenumber^2 \Dv \\
    \wavenumber^2 \Du & 0
  }\Mx{ \C_1 \\ \C_2 } = \lam \Mx{\C_1 \\ \C_2} , 
\end{equation}
where $\f'=\partial\f/\partial\u$, and the eigenvalues are
\begin{equation}
  \lam_{1,2} = \tfrac{1}{2} \left[
  \f'(\ueq) \pm \sqrt{(\f'(\ueq))^2
    - 4\wavenumber^2 \Du 
    - 4\wavenumber^4 \Du \Dv
  } \right] ,
  \eqlabel{RXD_26}
\end{equation}
see \fig{spec}.  Therefore, if $\f'(\ueq)$ is positive, then
$\Re{\lam_{1,2}}\ge0$ and the steady state $(\ueq,\veq)$ is unstable, and if
$\f'(\ueq)$ is negative then $\Re{\lam_{1,2}}<0$ for all $\wavenumber\ne0$,
and the state is stable in linear approximation. Of course, even if both
asymptotic states are stable, the stability of the whole front solution will
still depend on the discrete spectrum; this is outside our scope.

\section{Fixing the polynomial reaction term}
\seclabel{Correspondent}

In this section we will find the particular form of the polynomial function
$\fh(\uh)$, as well as the parameters $\A$ and $\B$ that satisfy
\eq{RXD_13}.  To achieve this, we substitute \eq{RXD_15} into
\eq{RXD_13}, which gives
\begin{equation}
  \k( \uh - \g) ( \uh - \h ) \left\lbrace
    \A \left[
      \k^2 \left( 2 \uh - \g-\h\right)^2 + 2\k^2 ( \uh - \g ) ( \uh - \h)
    \right]  + \B
  \right\rbrace  = \fh(\uh). \eqlabel{RXD_19}
\end{equation}
We take our quartic polynomial $\fh(\uh)$ in the following form:
\begin{equation}
  \fh(\uh) = \sig (\uh - \ur1 )(\uh - \ur2 ) (\uh - \ur3 ) (\uh - \ur4 )  
  \eqlabel{RXD_20} 
\end{equation}
where $\left\{\ur1,\ur2\right\}=\left\{\g,\h\right\}$ and without loss of
generality $\sig = \pm1$; a different scaling of $\f$ would just result in a
change of the spatial and temporal scale of the solutions.

By substituting \eq{RXD_20} into \eq{RXD_19}, we obtain
\begin{equation}
\begin{aligned}
  & \k(\uh-\g)(\uh-\h) \left\lbrace
    \A \left[
      \k^2 \left(2\uh-\g-\h\right)^2 +2\k^2(\uh-\g)(\uh-\h)
    \right] + \B \right\rbrace
  \\ &
  = \sig (\uh-\ur1)(\uh-\ur2)(\uh-\ur3)(\uh-\ur4) .
\end{aligned}
\eqlabel{RXD_poly_11}
\end{equation}
By equating like terms we obtain
\begin{equation}
\begin{aligned} \mbox{}
[\uh^4] :\; & \frac{6\k^3 \Du \Dv}{\c} = - \sig ; \\[0.5em]
[\uh^3] :\; & \frac{12 \k^3 \Du \Dv (\g+\h)}{\c} = - \sig (\ur1 + \ur2 + \ur3 + \ur4) ; \\[0.5em]
[\uh^2] :\; & \frac{\k^3 \Du \Dv (7\g^2 + 22 \g\h + 7 \h^2)}{\c} + \frac{\k(\c^2 - \Du)}{\c} \\
            & = - \sig (\ur1\ur2+\ur1\ur3+\ur1\ur4+\ur2\ur3+\ur2\ur4+\ur3\ur4) ; \\[0.5em]
[\uh^1] :\; & \frac{\k^3 \Du \Dv (\g+\h)(\g^2 + 10 \g\h + \h^2)}{\c} - \frac{\k(\g+\h)(-\c^2+\Du)}{\c} \\
            & = - \sig (\ur1\ur2\ur3 + \ur1\ur2\ur4 + \ur1\ur3\ur4 + \ur2\ur3\ur4 ) ; \\[0.5em]
[\uh^0] :\; & \frac{\k^3 \g\h \Du \Dv(\g^2 + 4\g\h + \h^2)}{\c} + \frac{\g\k\h(\c^2-\Du)}{\c}
              = - \sig \ur1\ur2\ur3\ur4.
\end{aligned}
\eqlabel{RXD_21}
\end{equation}
This imposes five constraints onto a set of 11 parameters $\k$, $\g$, $\h$,
$\sig$, $\Du$, $\Dv$, $\ur1$, $\ur2$, $\ur3$, $\ur4$ and $\c$; hence we can
describe all solutions of this system by assigning six of these parameters as
free, and then finding the remaining five parameters as dependent on these six
free parameters.  We restrict consideration to real values of parameters in
both groups, except possibly the roots $\ur{3,4}$.  Moreover, as parameters
$\g$ and $\h$ fix the positions of the pre- and post-front resting states of
the solution \eq{RXD_16}, it convenient to have these two among the free
parameters; note also that we have already constrained $\sig$ to $\pm1$.

\section{Possible types of solutions}
\seclabel{Possibilities_of_Generalising}

As discussed in the Introduction, this study is not motivated by any
real-world applications leading to specific examples of reaction
cross-diffusion systems.  Rather, we are interested in theoretical
possibilities achievable within a certain class of models.  With this in mind,
we want to see if we can make the reaction cross-diffusion system with quartic
polynomial to look like generalizations, in one sense or another, of other
well-known models, from the much better studied class of systems with
self-diffusion. We shall say that we ``imitate'' those models.  The models
that we want to imitate are Fisher-KPP and ZFK-Nagumo.  Those models exhibit
propagating front solutions with asymptotics
\[
  \u (\xf \rightarrow + \infty)  = 0, \qquad  \u (\xf \rightarrow - \infty)  = 1.
\]
If we identify the scalar field $\u$ here with the namesake first dynamic
variable in our system, then this property can be achieved by letting $\g=0$
and $\h=1$ in \eq{RXD_16}.

We found in the previous section that the stability of a spatially uniform
steady state depends on the sign of the derivative of the quartic polynomial
at that state.  In terms of stability, to imitate the ZFK-Nagumo wave, we
would need a stable pre-front state and a stable post-front state, and
consequently an unstable equilibrium in between.  To imitate the Fisher-KPP
wave front we would need to have an unstable pre-front state and a stable
post-front state, with either no or two equilibria in between.  In this respect,
the possibilities for front waves from the reaction cross-diffusion system
with quartic polynomial are constraint by the following proposition.

\begin{proposition} \proplabel{proposition} %
  If the boundary-value problem \eq{RXD_10} with the nonlinearity defined
  by \eq{RXD_20} and \eq{RXD_poly_11} has a travelling wave front
  solution of the form \eq{RXD_16}, then the two asymptotic resting states
  $\{\g,\h\}$ are either the two outer roots of the quartic polynomial
  $\fh(\uh)$, or its two inner roots.
\end{proposition}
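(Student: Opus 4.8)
The plan is to exploit the constraint $\{\ur1,\ur2\}=\{\g,\h\}$ built into \eqref{RXD_20}: because of it, $(\uh-\ur1)(\uh-\ur2)=(\uh-\g)(\uh-\h)$, so the quadratic $(\uh-\g)(\uh-\h)$ is a common factor of the two sides of \eqref{RXD_poly_11}. First I would cancel this factor, as an identity between polynomials in $\uh$, reducing \eqref{RXD_poly_11} to
\[
  \k\left\{\A\left[\k^2(2\uh-\g-\h)^2+2\k^2(\uh-\g)(\uh-\h)\right]+\B\right\}
  =\sig(\uh-\ur3)(\uh-\ur4).
\]
This isolates the two remaining roots $\ur3,\ur4$ on the right and drops the degree of the comparison from four to two, which is what makes it manageable.

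Next I would expand the left-hand bracket, using the identity $(2\uh-\g-\h)^2+2(\uh-\g)(\uh-\h)=6\uh^2-6(\g+\h)\uh+(\g+\h)^2+2\g\h$, and match the coefficients of $\uh^2$ and of $\uh^1$ with those on the right. This gives $6\A\k^3=\sig$ and $-6\A\k^3(\g+\h)=-\sig(\ur3+\ur4)$; since $\sig=\pm1$ the first forces $\A\k^3\neq0$, and dividing the second by it leaves the single key relation
\[
  \g+\h=\ur3+\ur4 .
\]
(Equivalently, this follows from the $[\uh^4]$ and $[\uh^3]$ lines of \eqref{RXD_21}; the lower-order lines only fix $\B$ and the product $\ur3\ur4$ and are not needed for the present claim.)

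The relation $\g+\h=\ur3+\ur4$ says exactly that the four roots of $\fh$ split into the two pairs $\{\g,\h\}$ and $\{\ur3,\ur4\}$ with equal sums. To conclude, I would relabel the four roots in increasing order as $r_1<r_2<r_3<r_4$ — taking them real and distinct, the case in which ``outer'' and ``inner'' are meaningful — and note that of the three ways to split four ordered reals into two disjoint pairs, two are forced to have strictly unequal sums, since $r_1+r_2<r_3+r_4$ and $r_1+r_3<r_2+r_4$; only the split $\{r_1,r_4\}$ against $\{r_2,r_3\}$ can meet the equal-sum condition. Hence $\{\g,\h\}$ must be either the outer pair $\{r_1,r_4\}$ or the inner pair $\{r_2,r_3\}$, which is the assertion. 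Everything up to the key relation is routine algebra; I expect the only real care to be in this last combinatorial step — checking that the equal-sum dichotomy is exhaustive, and observing separately that if $\ur3,\ur4$ form a complex-conjugate pair then $\g,\h$ are the only real roots, so the outer/inner alternative is read off trivially.
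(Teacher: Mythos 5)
Your proposal is correct and follows essentially the same route as the paper's proof: both cancel the common factor $(\uh-\g)(\uh-\h)$, identify $\ur3,\ur4$ as the roots of the remaining quadratic, extract the key relation $\ur3+\ur4=\g+\h$ (the paper phrases it as equality of midpoints), and conclude by the equal-sum argument, with the complex-conjugate and degenerate cases handled the same way. Your coefficient-matching and the combinatorial step are spelled out a bit more explicitly than in the paper, but the substance is identical.
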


\begin{proof} %
  From \eq{RXD_poly_11}, among the roots of $\fh(\uh)$ we have
  $\left\{\ur1,\ur2\right\}=\left\{\g,\h\right\}$, and the other two roots,
  $\ur{3,4}$, are the roots of the quadratic in the square brackets, which is
  equivalent to
  \[
    \uh^2
    -(\g+\h)\uh
    +\frac{\g^2+4\g\h+\h^2+ \B/(\A\k^2)}{6} = 0.
  \]
  Hence $\frac12(\ur3+\ur4)=\frac12(\g+\h)$. If $\ur{3,4}\in\Real$,
  $\ur3\ne\ur4$, this implies that either $\g$ and $\h$ are two inner roots
  while $\ur3$ and $\ur4$ are the two outer roots, or vice versa. If
  $\ur3=\ur4$ the $\g$ and $\h$ are the two outer roots out of the three, and
  if $\ur{3,4}\not\in\Real$, then $\g$ and $\h$ the only two, therefore
  automatically the outer, roots.
\end{proof}

From \prop{proposition}, we conclude that of the resting states of the
front wave solution, only one can be stable but not both.  That means, in the
considered reaction cross-diffusion system with the quartic polynomial, it is
impossible to imitate ZFK-Nagumo wave in terms of the stability of the resting
states, but there is a chance to imitate Fisher-KPP wave. We note, however,
that for any given set of parameters of the model, the speed of the front
solution is in any case uniquely fixed by~\eq{RXD_33}, and this feature is
characteristic of ZFK-Nagumo fronts rather than Fisher-KPP fronts.

\section{Choice of Signs to Imitate Fisher-KPP}
\seclabel{Imitate_Fisher}
We have found that there is a possibility to imitate Fisher-KPP front wave, in
terms of the stability of the pre-front and post-front equilibria, by reaction
cross-diffusion system with quartic polynomial nonlinearity.  In this section,
we will turn this possibility into reality, by identifying appropriate
parameter choices.

Firstly, let us make sure that solution given by \eq{RXD_16} satisfies the
asymptotic boundary conditions of Fisher-KPP front wave,
\begin{equation}
  \uh(+\infty)=0, \qquad \uh(-\infty)=1. 
  \eqlabel{RXD_30}
\end{equation}
In \secn{Correspondent} we found that six parameters in \eq{RXD_21} can
be arbitrary assigned. We choose $\k, \g$ and $\h$ as three of such free
parameters, in order to satisfy \eq{RXD_30}. We have already committed
ourselves to the choice $\{\g,\h\}=\{0,1\}$, and we require $\k\ne0$.
\Tab{sec_05} lists the resulting four \textit{a priori} possibilities.

\begin{table}[htbp]
  \caption{%
    Examining possible choices to imitate
    Fisher-KPP front. The symbols $(\nearrow)$ and $(\searrow)$ mean
    that $\ch(\xf)$ is an increasing or decreasing function,
    respectively. %
  }
  \begin{center}
    \begin{tabular}{|c||c|c|c||c|c|c|}
      \hline
      \multicolumn{4}{|c||}{Choices}  & \multicolumn{3}{|c|}{Results}  \\ \hline
                   & $\g$ & $\h$ & $\k$ & $\ch$ & $\u(+\infty)$ & $\u(-\infty)$ \\ \hline
      $\mathrm{I}$ & $1$ & $0$ & $(+)$ & $ \nearrow $ & $0$ & $1$ \\ \hline
      $\mathrm{II}$ & $1$ & $0$ & $(-)$ & $ \searrow $ & $1$ & $0$ \\ \hline
      $\mathrm{III}$ & $0$ & $1$ & $(+)$ & $ \searrow $ & $0$ & $1$ \\ \hline
      $\mathrm{IV}$ & $0$ & $1$ & $(-)$ & $ \nearrow $ & $1$ & $0$  \\ \hline
    \end{tabular}
  \end{center}
  \tablabel{sec_05}
\end{table}

Clearly, choices that comply with \eq{RXD_30} are $(\mathrm{I})$
and $(\mathrm{III})$.  In both cases, equation \eq{RXD_15} gives
\begin{equation}
  \y(\uh) = \k \uh(\uh-1), \quad
  \y'(\uh)= 2\k(\uh-1), \quad
  \y''(\uh) = 2\k ,
  \qquad  \k>0.
  \eqlabel{RXD_31}
\end{equation}
The quartic polynomial $\fh(\uh)$ posited in \eq{RXD_20} allows $\sig=1$ or
$\sig=-1$.  Remember that the equation for the coefficients at $\uh^4$ in
\eq{RXD_21} states
\begin{equation}
  6 \k^3 \Du \Dv  = -\sig \c . \eqlabel{RXD_32}
\end{equation}
If $\sig = 1 $ then the solution \eq{RXD_16} will not satisfy the condition
\eq{RXD_30}: since $\Du ,\Dv$ and $\c$ are positive, equation
\eq{RXD_32} implies $\k < 0$, which is inconsistent with \eq{RXD_31}.

So, we must choose $\sig = -1$, which together with
$\{\g,\h\}=\{\ur1,\ur2\}=\{0,1\}$ turns the system \eq{RXD_21} to
\begin{align*}
  \frac{6\k^3\Du\Dv}{\c} &= 1 , \\
  \frac{12\k^3\Du\Dv}{\c} &= 1+\ur3+\ur4  ,    \\
  \frac{6\k^3\Du\Dv}{\c}+ \frac{\k^3\Du\Dv}{\c} -\k \frac{-\c^2+\Du}{\c} &= \ur3+\ur4+\ur3\ur4 , \\
  \frac{\k^3\Du\Dv}{\c} -\k \frac{-\c^2+\Du}{\c} &= \ur3\ur4 , \\
  0 &= 0 . 
\end{align*}
Previously, we let variables $\g$, $\h$ and $\k$ be free parameters.  We now
add to that list $\Du$ and $\Dv$. The rest of the variables will be dependent
on those as follows:
\begin{align}
  \c  &= 6 \k^3 \Du \Dv , \eqlabel{RXD_33} \\
  \ur3 &= \tfrac{1}{2} - \tfrac{1}{6}\sqrt{3 + 36 \rh} , \eqlabel{RXD_40} \\
  \ur4 &= \tfrac{1}{2} + \tfrac{1}{6}\sqrt{3 + 36 \rh} . \eqlabel{RXD_41} \\
\end{align}
where 
\begin{equation}
  \rh = \frac{\k(\Du -\c^2)}{\c} . \eqlabel{RXD_38}
\end{equation} 

The quartic polynomial now has the form
\begin{equation}
\fh(\uh) = -\uh(\uh-1)(\uh-\ur3)(\uh-\ur4) ,
\eqlabel{RXD_39}
\end{equation}
where $\ur3$ and $\ur4$ are given by \eq{RXD_40} and \eq{RXD_41}.

We expect that, in principle, if the quartic polynomial is substituted into
the system \eq{RXD_2} , i.e.
\begin{equation}
\begin{aligned}
  \u_\t &=  - \u(\u-1)(\u-\ur3)(\u-\ur4) + \vs  -\v  + \Dv \v_{\x\x} \: ,\\
  \v_\t &= -\Du \u_{\x\x} \: ,
\end{aligned}
\eqlabel{RXD_42}
\end{equation}
then the solution of \eq{RXD_42} is a front wave which imitates the
front wave in Fisher-KPP with respects to the 
stability of the pre-front and post-front resting states.

The choices of values of the given parameters change the values of the roots
$\ur3$ and $\ur4$, which leads to one of the following cases.
\begin{enumerate}[leftmargin=5em,label=Case \Roman*:,ref=\Roman*]
\item\caselabel{inner} If $\rh \in ( \frac{1}{6} ,  + \infty)$, then
  $\ur{3,4}\in\Real\setminus[0,1]$ and the restings states $\{0,1\}$ are inner
  roots. 
\item\caselabel{double-double} If $\rh=\frac{1}{6}$, then
  $\{\ur3,\ur4\}=\{0,1\}$ and the resting states $\{0,1\}$ are the only two
  roots, both double.
\item\caselabel{outer-of-four} If $(\rh \in (-\tfrac{1}{12},\tfrac{1}{6})$, then 
  $\ur{3,4}\in(0,1)$, $\ur3\ne\ur4$, and the resting states  $\{0,1\}$ are
  outer of four roots. 
\item\caselabel{outer-of-three} If $\rh=-\tfrac{1}{12}$, then 
  $\ur3=\ur4=\tfrac12$, and the resting states  $\{0,1\}$ are
  outer of three roots. 
\item\caselabel{outer-of-two} If $\rh \in (-\infty, -\tfrac{1}{12})$, then 
  $\ur{3,4}\in\Complex\setminus\Real$ and the resting states  $\{0,1\}$ are
  the only two roots. 
\end{enumerate}
Remember that by virtue of \eq{RXD_38} and \eq{RXD_33}, this means that
the location of the roots $\ur{3,4}$ is determined by the three parameters
$\k , \Du$ and $\Dv$.

\section{Stability of the Resting States}
\seclabel{Stability_of_zero_one}

Previously, we have linearised the system \eq{RXD_2} for general function
$\f(\u)$ about an equilibrium and derived the formula of the eigenvalues
\eq{RXD_26}.  Substituting the quartic polynomial function \eq{RXD_39}
into the function of the eigenvalue yields that, the eigenvalues of the
equilibrium $\ur1 = 0$ are given by
\begin{equation}
  \lam_{1,2} = \frac12 \left[ \ur3\ur4 \pm \sqrt{\ur3^2\ur4^2
    - 4\wavenumber^2\Du
    - 4\wavenumber^4\Du\Dv
  }\right],
\eqlabel{RXD_46}
\end{equation}
while the eigenvalues of the equilibrium $\ur2 = 1$ are given by 
\begin{equation}
  \lam_{1,2} = \frac12 \left[ -(1-\ur3)(1-\ur4) \pm \sqrt{(1-\ur3)^2(1-\ur4)^2
    - 4\wavenumber^2\Du
    - 4 \wavenumber^4 \Du \Dv
  } \right].
\eqlabel{RXD_47}
\end{equation}

In the ``inner roots'' \case{inner}, the two roots $\ur3$ and $\ur4$
have different signs, and are to opposite sides of $1$.  Thus, from
\eq{RXD_46} and \eq{RXD_47} we deduce that the pre-front $\ur1=0$ is
stable and the post-front $\ur2=1$ is unstable.

The similarity between Fisher-KPP and inner roots case is that both systems
have two consecutive roots of $\fh(\u)$ that coincide with the resting states
of a wave front.  The difference between them is that the pre-front in
Fisher-KPP is unstable and the post-front is stable, while in inner
roots case it is the other way round,
the pre-front is stable and the post-front is unstable.

In the ``outer roots'' cases~\caseref{outer-of-four} and
\caseref{outer-of-three} as well as ``the only two roots''
\case{outer-of-two}, wee see from \eq{RXD_46} and \eq{RXD_47}
that the pre-front $\ur1=0$ is unstable and the post-front $\ur2=1$ is
stable. This matches the stability of the equilibria in Fisher-KPP model.

The marginal \case{double-double} gives $\Re{\lam_{1,2}}=0$ so the
stability of the resting states cannot be established in linear approximation,
and requires separate consideration. We leave this outside the scope of this
paper.

\Tab{stability_quartic_wave} sums up the results of above
analysis.

\begin{table}[htbp]
  \caption{%
    The stability of the resting states in the front wave depends on the choice
    of the roots of the quartic polynomial. %
  }
  \begin{center}
    \begin{tabular}{|c|c|c|c|}
      \hline 
      {\bf Choice of roots} & {\bf Pre-front} & {\bf Post-front} & {\bf Matching with Fisher-KPP}
      \\[0.7em] \hline
      \Case{inner}: Inner & stable & unstable & \xmark
      \\[0.4em] \hline
      \Case{outer-of-four}: Outer & unstable & stable &  \cmark 
      \\[0.4em] \hline
      \Case{outer-of-three}: Double & unstable & stable &  \cmark 
      \\[0.4em] \hline
      \Case{outer-of-two}: Complex & unstable & stable &  \cmark 
      \\ 
      \hline
    \end{tabular}
  \end{center}
  \tablabel{stability_quartic_wave}
\end{table}

In the next section we will show the result of the numerical
simulation for each case.

\section{Numerical Simulations}
\seclabel{Num_Simul}

\subsection{General settings}

We simulate numerically the reaction cross-diffusion system
\begin{equation}
\begin{aligned}
  \u_\t &= \f(\u,\v) + \Dv \v_{\x\x} , \\ 
  \v_\t &= -\Du \u_{\x\x} ,
\end{aligned}
\eqlabel{RXD_48}
\end{equation}
for $- \a \leq \x \leq \b$ and $\t \geq 0$, where the kinetic term $\f(\u,\v)$
is quartic polynomial
\[
  \f(\u,\v) = - \u(\u-1)(\u-\ur3)(\u-\ur4)  - \v ,
\]
and $\ur3$ and $\ur4$ are dependent parameters defined in
\eq{RXD_40} and \eq{RXD_41}.
We apply no-flux boundary conditions,
\[
  \u_\x(-\a,\t) = \u_\x(\b,\t) = \v_\x(-\a,\t) = \v_\x(\b,\t) = 0,
\]
and the initial condition taken from the analytical solution, that is 
\[
  \u(\x,0) = \uh(\x),
  \qquad
  \v(\x,0) = \vh(\x),
\]
where $\uh$ and $\vh$ are defined in \eq{RXD_16} and \eq{RXD_17}.

\begin{table}[htbp]
  \caption{%
    Parameters and equilibria in numerical simulations. %
  }
  \begin{center}
  \begin{tabular}{|c|c|c|c|c|}\hline
    Case &
           \caseref{inner} &
           \caseref{outer-of-four} &
           \caseref{outer-of-three} &
           \caseref{outer-of-two} \\\hline
    Figure(s) & \figref{Inner_roots_TW},%
                \figref{proof_dynamical_instability_COMPLEX},%
                \figref{proof_dynamical_instability_INNER} &
           \figref{proof_dynamical_instability_outer} &
           \figref{Double_roots_TW},%
                \figref{proof_dynamical_instability_double} &
           \figref{Complex_roots_TW} \\\hline
    $\k$ & $1$ & $1$ & $1$ & $1$\\\hline
 $\Du$ & $1.25$ & $0.2$ & $2.917$ & $0.4$\\\hline
 $\Dv$ & $0.1$ & $0.35$ & $0.1$ & $1.5$\\\hline
 $\ur1$ & $0$ & $0$ & $0$ & $0$\\\hline
 $\fh'(\ur1)$ & $-0.75$ & $0.11$ & $0.25$ & $3.656$\\\hline
 $\ur2$ & $1$ & $1$ & $1$ & $1$\\\hline
 $\fh'(\ur2)$ & $0.75$ & $-0.11$ & $-0.25$ & $-3.656$\\\hline
 $\ur3$ & $1.5$ & $0.874$ & $0.5$ & $0.5+1.845\iu$\\\hline
 $\fh'(\ur3)$ & $-1.5$ & $0.083$ & $0$ & \\\hline
 $\ur4$ & $-0.5$ & $0.126$ & $0.5$ & $0.5-1.845\iu$\\\hline
 $\fh'(\ur4)$ & $1.5$ & $-0.083$ & $0$ & \\\hline
\end{tabular}
  \end{center}
  \tablabel{par-eq}
\end{table}

We will show the results of the simulation for cases
\caseref{inner}, %
\caseref{outer-of-four}, %
\caseref{outer-of-three} and %
\caseref{outer-of-two} %
identified above.  For each case, we pick an appropriate set of values of the
free parameters to satisfy the correspoinding
conditions. \Tab{par-eq} lists the parameter values used and the
corresponding equilibria. Note that the value of $\Du$ for Case
\caseref{outer-of-three} in the table is given to three decimal places; in
fact it was determined from the exact condition that $\rh=-1/12$, which
implies
\begin{equation}
\eqlabel{cond_for_double_root}
\Du = \frac{2 + \k^2 \Dv}{72 \k^6\Dv^2}.
\end{equation}

The numerical simulations are done using finite differences, fully explicit
first order for time and second order central for space. The space
discretization interval is $[-\a,\b]=[-37.5,150]$ and the discretisation steps
are $\Delta \x = 0.15$ and $\Delta \t = 4 \times 10^{-6}$ unless otherwise
stated.  The choice of the discretization steps is motivated by the numerical
stability and accuracy analysis of the scheme, which will be presented later.

\subsection{The inner roots case}
\seclabel{Inner}

\begin{figure}[tbhp]
  \centerline{\includegraphics{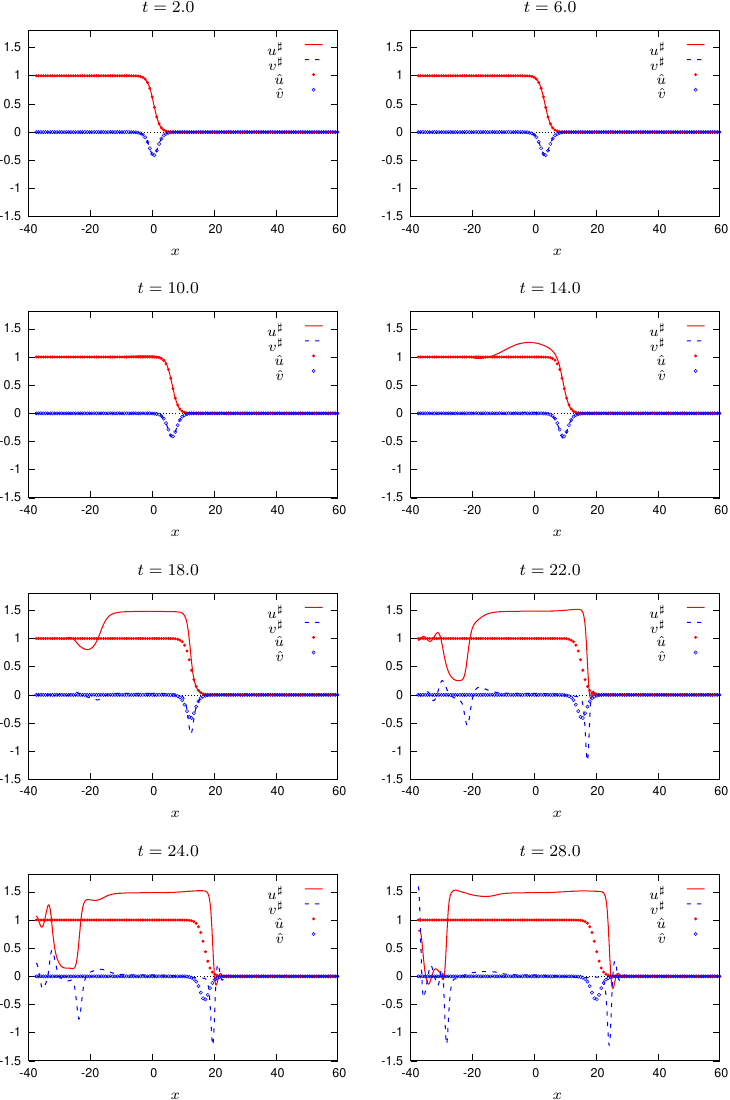}}
  \caption{%
    The numerical simulation of reaction cross-diffusion system with quartic
    polynomial where the resting states of the front coincides with the {\bf
      inner roots} of the quartic. The values of parameters in this
    simulations are $\Du=1.25$, $\Dv =0.1$ and $\k=1$.  Here and in the
    subsequent figures, $\unum=\unum(\x,\t)$, $\vnum=\vnum(\x,\t)$ is the
    numerical solution, whereas $\uana=\uana(\x-\c\t)$, $\vana=\vana(\x-\c\t)$
    is the analytical solution used as the initial condition for the
    numerics.  %
  }
  \figlabel{Inner_roots_TW}
\end{figure}

As shown above, in this case the pre-front equilibrium $\ur1=0$ is stable,
while the post-front equilibrium $\ur2=1$ is unstable. Hence we expect in
simulations that the post-front state evolves to another, stable equilibrium.
This is indeed what happens in simulations, see \fig{Inner_roots_TW}.

For the parameters used in this simulations, the unstable equilibrium $\ur2=1$
is surrounded by the pre-front equilibrium $\ur1=0$ and the upper stable
equilibrium $\ur4=1.5$.  Thus in this case we expect the post-front state
attracted to either of these two stable equilibria.

In fact, the solution curiously does both, i.e. is first attracted to the upper
stable equilibrium, $\ur4=1.5$, but does not stay there for long and departs
for the lower stable equilibrium, $\ur1=0$. As a result, a pulse-shaped
solution develops, with the pre-front and post-front states at $\ur1=0$, and
the plateau state near $\ur4=1.5$. This phenomenology is similar to that
observed in \cite{Biktashev-Tsyganov-2005} for excitable (i.e. one stable
equilibrium) cross-diffusion systems, incluiding oscillatory front and
oscillatory back, both trigger waves from one stable equilibrium to another
--- and is of course very far from the initial condition which is a monotonic
front from a stable equilibrium to an unstable one.

\subsection{The Result of Simulation of Distinct Real Roots, Double Roots and Complex Roots}
\seclabel{rest_cases}

\begin{figure}[tbhp]
  \centerline{\includegraphics{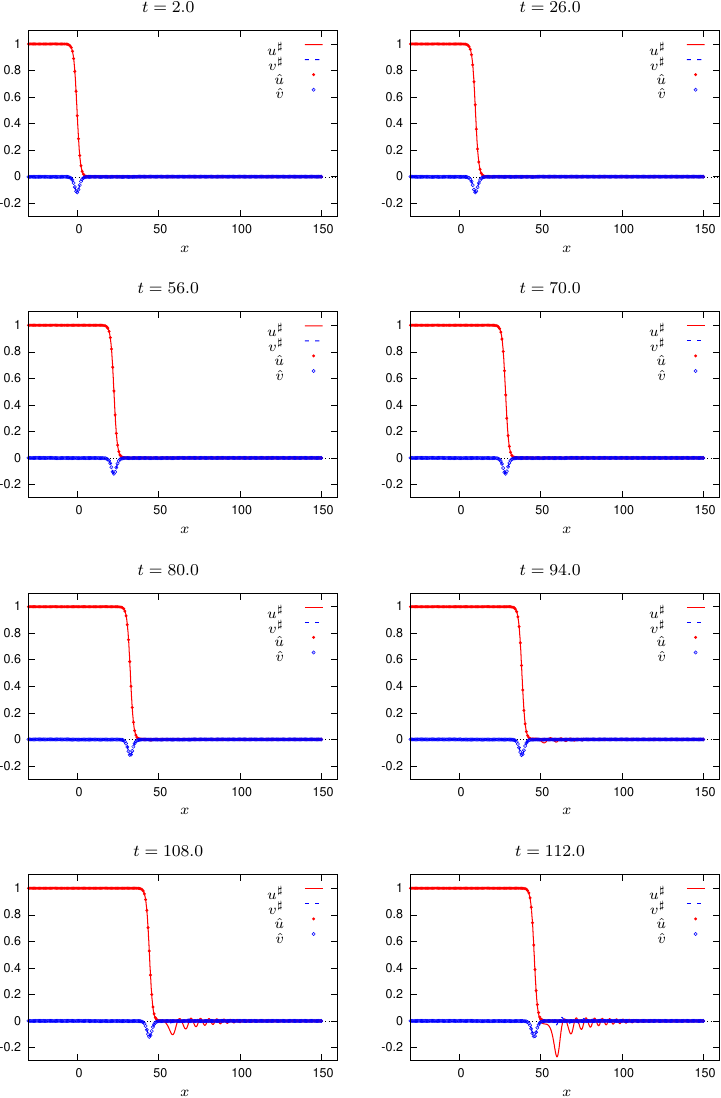}}
  \caption{%
    The numerical simulation of reaction cross-diffusion system with quartic
    polynomial where the resting states of the front coincides with the {\bf
      outer roots} of the quartic. The values of parameters in this
    simulations are $\Du=0.2$, $\Dv =0.35$ and $\k=1$. %
  }
  \figlabel{outer_roots_TW}
\end{figure}

The behaviour of the propagating wave front for the distinct real roots case
and double roots case is quite similar.  The simulation shows that the
numerical propagating wave remains close to the analytical wave for a period
of time. Then an oscillation appears near the onset of the front. After that the
oscillation grows as the time evolves, which causes the numerical solution to
break up.  The results of the simulation of distinct real roots case is shown
in \fig{outer_roots_TW} while the results of double roots case is shown
in \fig{Double_roots_TW}.

For complex roots case, we observe that the instability occurs earlier than
all previous cases (inner roots case, outer roots case and double roots case).
Moreover, the numerical front does not last as long as those front waves in
the other cases, see \fig{Complex_roots_TW}.

\begin{figure}[tbhp]
  \centerline{\includegraphics{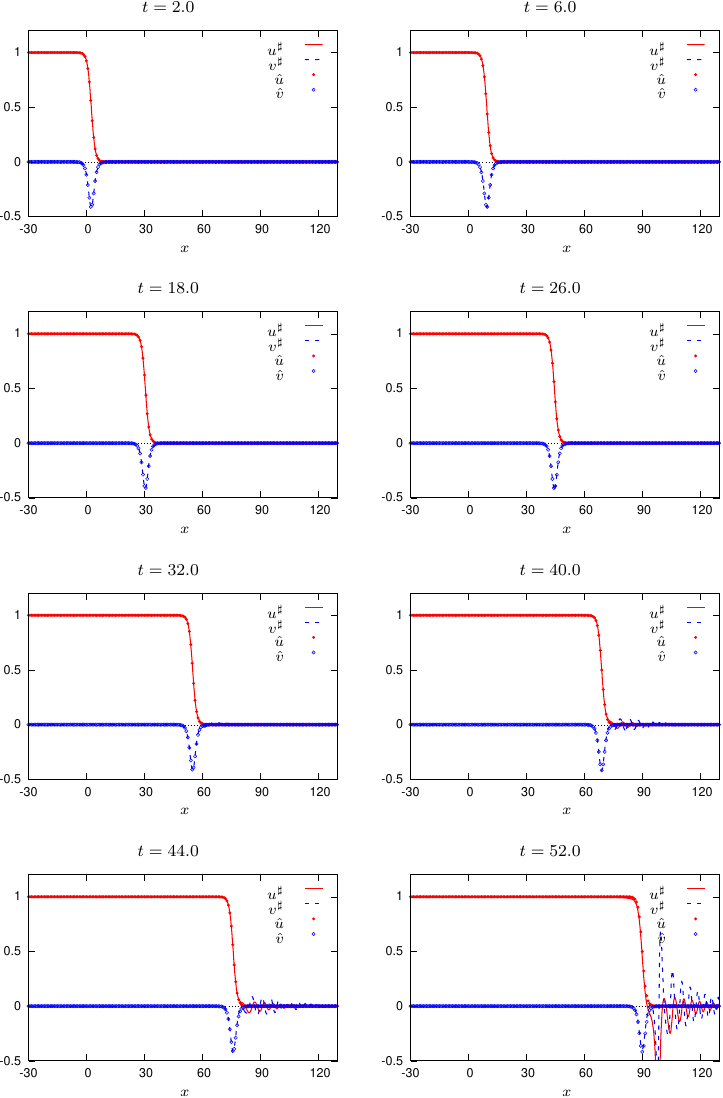}}
  \caption{%
    The numerical simulation of reaction cross-diffusion system with quartic
    polynomial where there are {\bf double roots} and the resting states are
    simple roots. The values of parameters in these simulations are $\Dv =0.1$
    and $\k=1$ where $\Du$ is given in the formula
    \eq{cond_for_double_root}. %
  }
  \figlabel{Double_roots_TW}
\end{figure}

\begin{figure}[tbhp]
  \centerline{\includegraphics{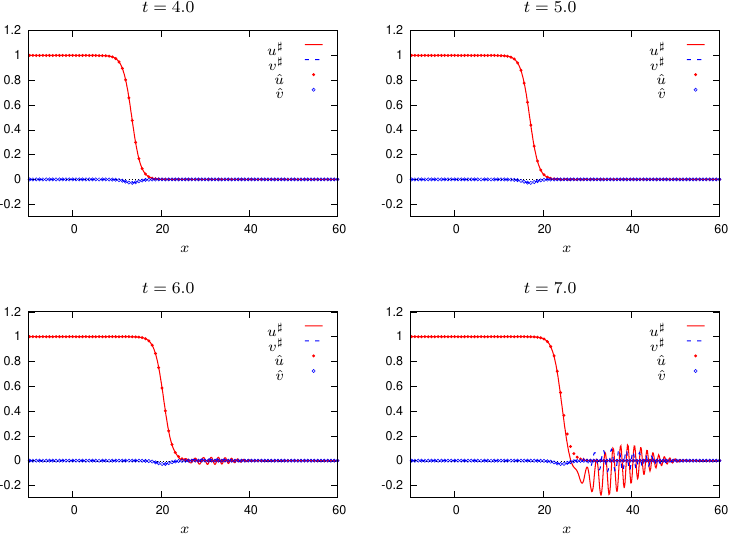}}
  \caption{%
    The numerical simulation of reaction cross-diffusion system with quartic
    polynomial where there are two {\bf complex conjugate roots}. The values
    of parameters in these simulations are $\Du=0.4 , \Dv =1.5$ and
    $\k=1$. The instability make the numerical solution run away at $\t= 8$.
  }
  \figlabel{Complex_roots_TW}
\end{figure}  

\section{The instability of the solution}
\seclabel{instability_FRONT_outer}

In the previous sections we have shown the results of direct numerical
simulation on reaction cross-diffusion system \eq{RXD_48} where the initial
condition is an exact analytical wave solution.  This analytical solution
presents a monotonic wave front.

We have considered four cases, corresponding to different positions of the
roots of the quartic polynomial.  In all four cases considered, there are
oscillations which appear near the onset of the wave front.  These oscillations
grow as time evolves, which obviously means
that the propagating wave front is not
stable.  We now would like to address the question whether this was due to
dynamical instability in the underlying partial differential equations, or
numerical instability, i.e. artefact of the numerical scheme used.

Our plan on how to distinguish numerical instability from the numerical is as
follows.  If the instability is numerical, then its features shall
significantly depend on details of the numerical scheme. For instance, the
oscillations could be reduced by changing the discretisation steps.
Conversely, the dynamical instability the behaviour of the solution may be
affected by refining the discretisation steps only slightly, if the simulation
is ``resolved''.

A crude theoretical analysis of numerical stability of the scheme we use can
be achieved by removing the kinetic terms from system \eq{RXD_48}. In this
way, we obtain the following
\begin{align*}
  \u_\t &= \Dv \v_{\x\x}, \\
  \v_\t &= -\Du \u_{\x\x}.
\end{align*}
For the forward-time, central-space discretization on the grid
$\x\in\stepx\,\Zahlen$, $\t\in\stept\,\Zahlen$, using the standard von Neumann
stability analysis, for the Fourier component $(\u,\v)\propto\e^{\iu\q\x}$ we
find the amplification factor $\ampf$, such that
\begin{equation} \eqlabel{RXD_51}
  \abs{\ampf(\q)}^2 = 1 + 16\Du\Dv\stept^2 \stepx^{-4} \, \sin^4\left(\q\stepx/2\right) ,
\end{equation}
which means that the numerical scheme is unstable as the condition
$\abs{\ampf} \leq 1$ will not be satisfied, in principle, for any choice of
discretization steps.
 
However, let us look at the quantitative aspect of the numerical instability.
Namely, let us estimate the time it takes for the numerical instability to
grow to macroscopic value. Supposing, for a crude estimate, that the seed of
the instability comes from round-off errors, so is of the order of machine
epsilon $\meps$, and it will become significant when it grows to an order of
$1$. Then, with the amplification factor $\ampf(\q)$, the number of time steps
required for that will be at least
$\ln\abs{1/\meps}/\max_{\q}\left(\ln\abs{\ampf(\q)}\right)$.
Taking the leading order approximation for the $\ln\abs{\ampf(\q)}$ in
\eq{RXD_51}, we get the time interval required for the instability to grow
to macroscopic size as
\[
  \Tinst \approx \frac{\ln\abs{\meps^{-1}} \, \stepx^4 }{8 \stept \Du \Dv } .
\]
By substituting the values of parameters we used in our simulation, we see
that in all cases $\Tinst$ is much bigger than the time $\Tbreak$ taken for
the numerical waves to break up.  \Tab{Comparison_T_inst_growth}
clarifies more by numbers.  We took $\meps=10^{-15}$.

\begin{table}[htbp]
  \caption{%
    Comparison between the theoretical instability time $\Tinst$, and time
    $\Tbreak$ to break-up in numerics, in the four selected simulations. %
  }
  \begin{center}
  \begin{tabular}{|c|c|c|} \hline 
    Case  & $\Tinst$ & $\Tbreak$ \\ \hline
Inner roots & $4371.3$ & $30$ \\\hline
Outer roots & $7805.9$ & $112$ \\\hline
Double roots & $1873.2$ & $52$ \\\hline
Complex roots & $910.7$ & $7$ \\\hline
\end{tabular}
  \end{center}
  \tablabel{Comparison_T_inst_growth}
\end{table}

This comparison suggests that even though the numerical scheme is
formally unstable, this instability cannot affect the
  numerical solutions on the time intervals involved. This means that
  there is no need to look for more sophisticated, stable methods to
  simulate the solutions presented. This also means that the
  numerical instability cannot explain the behaviour observed in our
numerics, and we must consider the possibility of a dynamical
instability.

So, according to our plan, we have verified the plausibility of a dynamical
instability by repeating the simultions at different discretization steps.  We
have repeated each of the simulations, once with bigger discretization steps
and once with smaller discretization steps. We have found that the behaviour
of the solution does not significantly change even after we refine the
discretisation.  More precisely, once the oscillations appear, we have found
the growth rate of the oscillation is the same in all different discretisation
steps. \Fig{proof_dynamical_instability_outer} illustrates that for
the ``outer roots'' case: even though the moment of onset of the instability
depends on the discretization, its growth rate is not affected by it.

\begin{figure}[tbhp]
  \centerline{\includegraphics{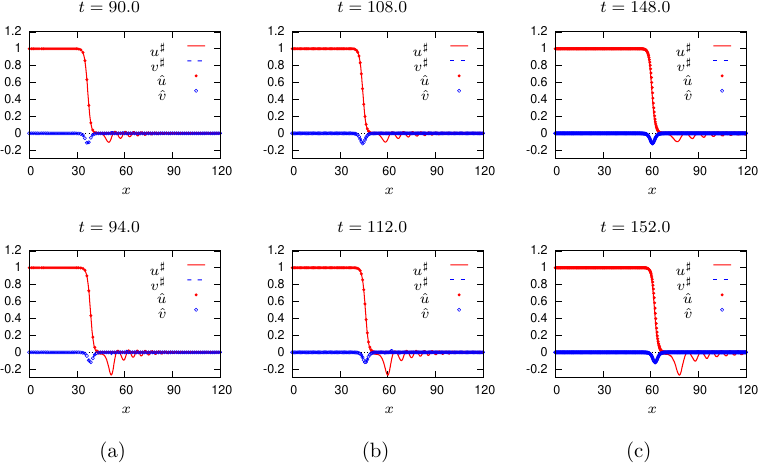}}
  \caption{%
    The dynamical instability appears for {\bf outer roots case}. The
    behaviour of the solution does not change even after the steps are
    refined. The values of parameters are $\k=1$ , $\Du = 0.2$ and
    $\Dv = 0.35$. The discretisation is: %
    (a) $\stepx = 0.25$, $\stept = 4 \times 10^{-5}$; %
    (b) $\stepx = 0.15$, $\stept = 4 \times 10^{-6}$; %
    (c) $\stepx = 0.05$, $\stept = 1 \times 10^{-7}$. %
  }
  \figlabel{proof_dynamical_instability_outer}
\end{figure}

The same thing happened in double roots case and complex roots case. Change of
discretisation steps changes the time of the onset of the instability, but not
the growth rate of the instability, as can be seen in 
\fig{proof_dynamical_instability_double} and \fig{proof_dynamical_instability_COMPLEX}.

\begin{figure}[tbhp]
  \centerline{\includegraphics{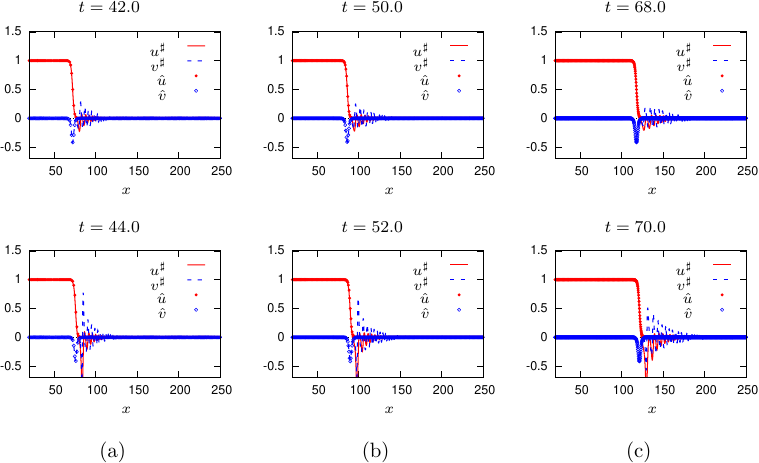}}
  \caption{%
    The dynamical instability appears for {\bf double roots case}. Each
    column represents the front wave for different discretisation
    steps. The behaviour of the solution does not change even if the
    steps are refined. The values of parameters are $\k=1$ and
    $\Dv = 0.1$. The discretisation is: %
    (a) $\stepx = 0.25$, $\stept = 4 \times 10^{-5}$; %
    (b) $\stepx = 0.15$, $\stept = 4 \times 10^{-6}$; %
    (c) $\stepx = 0.05$, $\stept = 1 \times 10^{-7}$. %
  }
  \figlabel{proof_dynamical_instability_double}
\end{figure}

\begin{figure}[tbhp]
  \centerline{\includegraphics{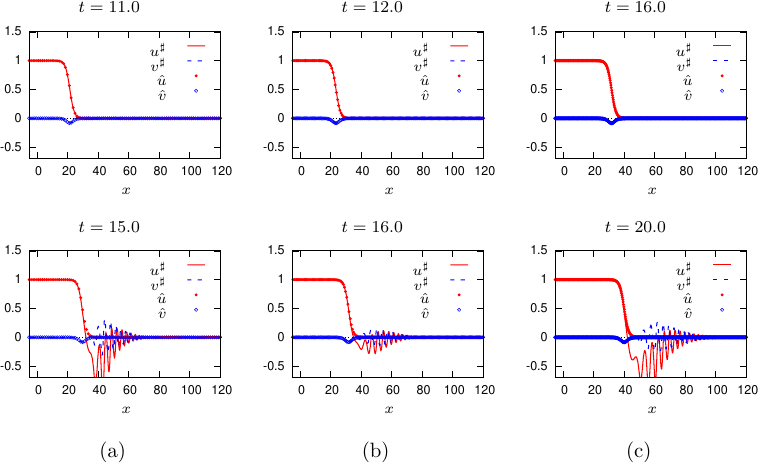}}
  \caption{%
    The dynamical instability appears for {\bf complex roots case}. Each
    column represents the front wave for different discretisation
    steps. The behaviour of the solution does not change even if the
    steps are refined. The values of parameters are
    $\k=1$, $\Du = 1.25$ and $\Dv = 0.1$. The discretisation is: %
    (a) $\stepx = 0.25$, $\stept = 4 \times 10^{-5}$; %
    (b) $\stepx = 0.15$, $\stept = 4 \times 10^{-6}$; %
    (c) $\stepx = 0.05$, $\stept = 1 \times 10^{-7}$. %
  }
  \figlabel{proof_dynamical_instability_COMPLEX}
\end{figure}

For the ``inner roots'' case, the initial condition is a front of invasion of
an unstable state into a stable state, and the numerical simulation show
behaviour different from other cases: now the instability appears, at first,
as the elevation of the $\u$-field right behind the front. So we observe how
this instability changes with different discretization steps.  The result is
shown in \fig{proof_dynamical_instability_INNER}. We see, again,
that the time of the onset of the instability does depend on the
discretization steps, but the growth rate remains the same. The subsequent
behaviour of the solution also remains qualitatively similar, involving
formation of a propagating pulse with a plateau and a back --- even though
shifted in time and differing in detail, which is of course only expectable
for a solution affected by a dynamical instability.

\begin{figure}[tbhp]
  \centerline{\includegraphics{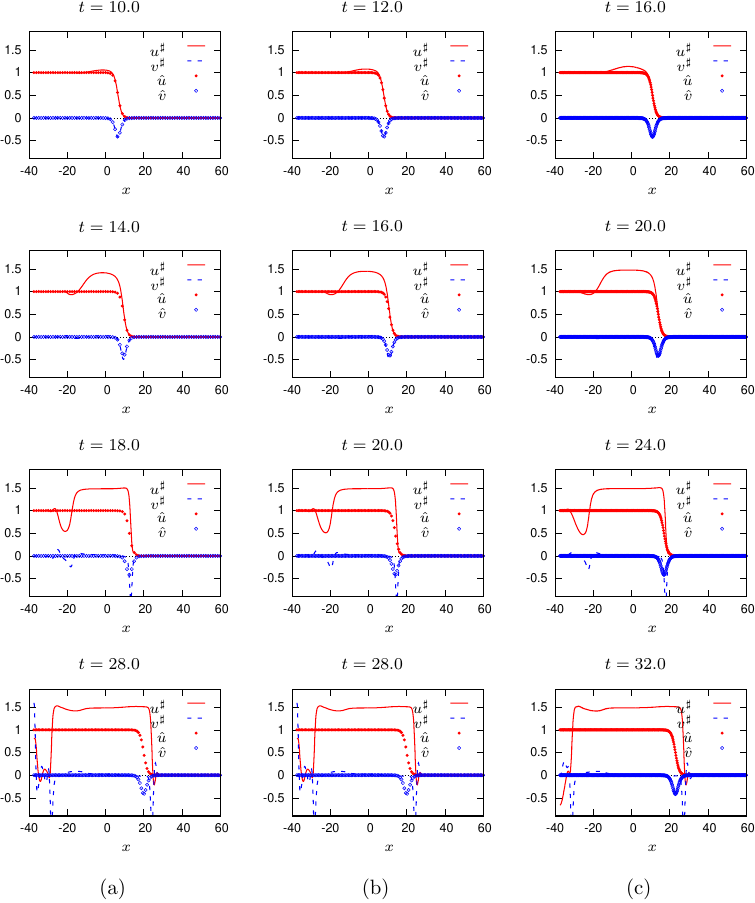}}
  \caption{%
    The dynamical instability appears for {\bf inner roots case}. Each column
    represents the front wave for different discretisation steps. The
    behaviour of the solution does not change even the steps are refined. The
    values of parameters are $\k=1$, $\Du = 1.25$ and $\Dv = 0.1$. The
    discretisation is: %
    (a) $\stepx = 0.25$, $\stept = 4 \times 10^{-5}$; %
    (b) $\stepx = 0.15$, $\stept = 4 \times 10^{-6}$; %
    (c) $\stepx = 0.05$, $\stept = 1 \times 10^{-7}$. %
  }
  \figlabel{proof_dynamical_instability_INNER}
\end{figure}

We can conclude that insofar as it may be established by numerical
simulations, the analytical front solutions are dynamically unstable:
they yield to solutions with oscillatory fronts,
  which are beyond the main scope of the current paper and requires
  separate study.

\section{Discussion}
\seclabel{discussion}

The main purpose of the paper, which has been successfully achieved, was to
demonstrate the feasibility, and provide an example, of constructing a PDE
model of a certain class which has desirable analytical solutions. Regardless
of the utility of the particular example we have considered, we hope that the
technique we used may be helpful in other problems similarly formulated.

More specifically, our aim has been a reaction-cross-diffusion system with a
polynomial nonlinearity, which would have solutions in the form of a
propagating front. We have found that to achieve that, the nonlinearity must
be at least quartic, in which case the system may indeed have solutions in the
form of  monotonic propagating fronts.  The situation is similar to
ZFK-Nagumo model rather than Fisher-KPP model in that for given parameters of
the system, the speed and shape of the front solution are uniquely defined.

We have further established that in terms of stability of pre-front and
post-front equilibria, the proposed model may be likened to the Fisher-KPP
system (one of the equilibria is stable and the other unstable) but not
ZFK-Nagumo (with both equilibria stable).

The quartic nonlinearity can be of various diffierent classes depending on
behaviour of its four roots: when the asymptotic equilibria are two inner
roots, two outer roots out of four, two outer roots out of three, the only two
simple roots (with the other two being complex) and two double roots.

We have made simulations of selected examples of the proposed model belonging
to different algebraic classes, and in all of these examples it happened that
the analytical solutions are dynamically unstable, with some of the
instabilities distinct from those related to the unstable pre-front
equilibrium. Since the conclusion about instability of the solutions is based
only on direct numerical simulations of arbitrarily selected examples, it
requires further investigation,  both theoretically and
numerically, perhaps including continuation of propagating wave
  solutions rather than  just direct numerical simulations,
and wider parametric searches.  A good survey of the relevant theory can be
found in~\cite{Sandstede-2002}, and examples of numerical tools
suitable for this task are AUTO~\cite{Doedel-etal-2007} and WAVETRAIN
\cite{Sherratt-2012}.

Returning to feasibility of proposed PDE system as a model of real processes,
we recall that KPP-Fisher is a viable model despite the unstable pre-front
state. As it is well known, there are two inter-related reasons for that. One
reason is the positivity of the equation: non-negative initial conditions
guarantee that the solution will remain non-negative at all times. Since the
linearly unstable pre-front state is $0$, i.e. at the border of the domain
invariant under the system, this motivates restriction on the class of
perturbations considered to those that would respect the positivity. The other
reason is also related to the fact that the pre-front state is $0$, but is of
physical rather than mathematical nature: it motivates applications in which
the dynamic field represent an essentially non-negative quantity with the
meaning of a concentration of some kind; specifically, in the seminal
papers~\cite{Fisher-1937,Kolmogorov-etal-1937} it was population
density. With that physical sense of the dynamic field, the magnitude of
physically feasible perturbations related to fluctuations must decay as the
system gets closer to the pre-front state, and exactly vanish at that
state. This motivates consideration of solutions in specially constructed
functional spaces that take this issue into account, in which the solution may
be stable --- despite the formal instability of the pre-front state in the
sense of generic dynamical systems theory. In this context, the possibility
of, and, as numerics show, preference for, the non-monotonic fronts is only
possible because the class of model we consider does not possess the
positivity property. Here we note that the models with linear cross-diffusion
cannot have that property in principle, see e.g.~\cite{Gorban-etal-2011}.

The above consideration motivates possible continuation of the present work:
\begin{itemize}
\item ZFK-Nagumo type fronts, i.e. monotonic fronts with stable pre-front and
  stable post-front states, may be sought for in models with polynomial
  nonlinearity of degrees higher than four;
\item Reasonably stable monotonic fronts switching from a zero pre-front state
  may be observed in models with nonlinear cross-diffusion,
  e.g. ``pursuit-evasion'' type mutual taxis of the components;
\item As the fronts actually observed in numerical simulations of
  cross-diffusion models so far are typically oscillatory, search of exact
  solutions of that kind would involve ``inventing'' an ansatz more
  sophisticated than that given by \eq{RXD_15} and \eq{RXD_16}.
\end{itemize}
All that should be considered in the context that the problem addressed in
this paper is about the ``fast subsystem'' in \eq{RXD_main}, and encompasses
just the first step in the singular perturbation theory in the limit
$\eps\to0$.

\section*{Acknowledgments}
A.A. is grateful to %
Prince Sattam Bin Abdulaziz University %
for sponsoring his Ph.D. studentship. %
V.N.B.'s work was supported in part by %
the EPSRC Grant No. EP/N014391/1 (UK), 
and National Science Foundation Grant No. NSF PHY-1748958, NIH Grant
No. R25GM067110, and the Gordon and Betty Moore Foundation Grant
No. 2919.01 (USA). 

\section*{References}


\end{document}